\newcommand{\e}{\mathrm{e}}
\DeclareMathOperator\Ei{Ei}
\newtheoremstyle{definition}%
  {0.5em}
  {\topsep}
  {\upshape}
  {}
  {\bfseries}
  {.}
  { }
  {}
\theoremstyle{plain}
\newtheorem{theorem}{Theorem}
\newtheorem{corollary}{Corollary}
\newtheorem{remark}{Remark}
\theoremstyle{definition}
\def \a {{a_\kappa}}
\def \b {{b_\kappa}}
\def\blfootnote{\xdef\@thefnmark{}\@footnotetext}
\begin{document}

\title{Exact Statistical Characterization of $2\times2$ Gram Matrices with Arbitrary Variance Profile}

\author{N. Auguin, D. Morales-Jimenez, M. R. McKay

\thanks{Copyright~\copyright~2015 IEEE. Personal use of this material is permitted. However, permission to use this material for any other purposes must be obtained from the IEEE by sending a request to pubs-permissions@ieee.org.}
\thanks{The authors are with the Department of Electrical and Computer Engineering, Hong Kong University of Science and Technology, Hong Kong. E-mail: nicolas.auguin@connect.ust.hk.}
\thanks{This work was supported by the Hong Kong RGC General Research Fund  under grant number 616713, and the Hong Kong Telecom Institute of Information Technology under grant number HKTIIT16EG01.}
}


\maketitle

\begin{abstract}
This paper is concerned with the statistical properties of the Gram matrix $\mathbf{W}=\mathbf{H}\mathbf{H}^\dagger$, where $\mathbf{H}$ is a $2\times2$ complex central Gaussian matrix whose elements have arbitrary variances. With such arbitrary variance profile, this random matrix model fundamentally departs from classical Wishart models and presents a significant challenge as the classical analytical toolbox no longer directly applies. We derive new exact expressions for the distribution of $\mathbf{W}$ and that of its eigenvalues by means of an explicit parameterization of the group of unitary matrices.
Our results yield remarkably simple expressions, which are further leveraged to study the outage data rate of a dual-antenna communication system under different variance profiles.
\end{abstract}

\begin{IEEEkeywords}
Random matrix theory, MIMO channels, eigenvalue distribution.
\end{IEEEkeywords}

\IEEEpeerreviewmaketitle

\section{Introduction}
This paper investigates the statistical properties of random matrices of the form $\mathbf{W} = \mathbf{H} \mathbf{H}^\dagger$, where $\mathbf{H}$ is $2 \times 2$ with independent entries  
\begin{align} \label{eq:Basic}
\mathbf{H}_{ij}  \sim \mathcal{C N} (0,  \phi_{ij}) , \quad \quad i, j = 1, 2 \; .
\end{align}
The distinguishing feature is that the variance profile, $\{\phi_{ij}\}_{i,j=1,2}$, is allowed to be \emph{arbitrary}.  

Despite its apparent simplicity, it is remarkable that little is known about the statistical properties of such matrices, beyond specific examples. Most notable is the case where the variances factorize as $\phi_{i j} = \sigma_i \pi_j$, where the model bears a strong analogy with so-called ``Kronecker correlated'' models that have been studied extensively in communication theory (see, for example, \cite{hanlen2003capacity,shin2006capacity}) as well as in classical statistics (see, for example, \cite{James1964,muirhead2009aspects}). Such Kronecker models, as well as their numerous adaptations or extensions (e.g., \cite{jayaweera2003performance,wang2005capacity,jin2008transmit}), enjoy certain symmetry properties that allow their characterization by leveraging classical tools in multi-variate analysis, such as known matrix-variate integrals, hypergeometric functions of matrix arguments and zonal polynomials \cite{James1964,muirhead2009aspects,mehta2004random}. The model in (\ref{eq:Basic}) is fundamentally different, in that it does not readily lend itself to analysis via these classical techniques.

From a communication engineering perspective, models of the form (\ref{eq:Basic}) are useful since they can suitably characterize channels between multiple transmit and receive antennas that are arbitrarily distributed in space. These may include, among others, the so-called distributed antenna systems (DAS), which have recently attracted interest within the wireless communications community \cite{zhang2004capacity,saleh1987distributed,roh2002outage}. Despite the interest of DAS and the trends towards ever more heterogeneous and distributed network architectures, a precise understanding of such systems remains outstanding, due in part to the scarcity of statistical results on the underlying random matrix model. Such results have mainly been established in the asymptotic regime where the dimensions of the random matrix $\mathbf{H}$ grow large (see, \cite{zhang2004capacity,hachem2008clt}). These asymptotic results are rather complex and serve primarily as approximations for large-dimensional systems whose behavior may differ from that of finite ones.

In this paper, we present an exact characterization of random matrix models with arbitrary variance profiles, deriving for the first time new exact expressions for the joint distribution of (i) the random matrix $\mathbf{W}$, and (ii) its eigenvalues. While we focus on the $2 \times 2$ case, we demonstrate that the analysis is still rather complicated. A main challenge encountered in the derivations is that they involve the computation of certain integrals with respect to the group of $2 \times 2$ unitary matrices. These integrals are not classical, and we solve them by working with an explicit parametrization of the unitary group. Despite the complexity of the derivations, our results yield an exact and remarkably simple expression for the matrix density, along with a tractable expression for the eigenvalue density which reduces to particularly simple forms for various choices of the variance profile. Building upon these results, we further derive simple expressions for the distribution of the extreme eigenvalues, which are then leveraged to study the outage data rate of a dual-user multi-antenna communication system under different variance profiles. In particular, we show that asymmetry in the variance profile can significantly degrade the outage rate of systems with distributed antennas.

\section{Main results} \label{sec:main} 
This section presents our key mathematical results. 

\label{sec:gen}
\begin{theorem} \label{th:W}
Consider $\mathbf{W}=\mathbf{H}\mathbf{H}^{\dagger}=\bigl( \begin{smallmatrix} 
  w_{1} & w_{3}\\
  w_{3}^\star & w_{2} 
\end{smallmatrix} \bigr) \succeq \mathbf{0}$, with $\mathbf{H}$ as in (\ref{eq:Basic}), with $\phi_{ij}>0$ for $i,j=1,2$. Assume $\phi_{i1} \neq \phi_{i2}$ for some $i$. The probability density function (PDF) of $\mathbf{W}$ admits
\begin{align}
p(\mathbf{W}) &= \frac{K}{\pi} \, e^{-\frac{1}{2}\left(w_{1}s_{1}+w_{2}s_{2}\right)}\nonumber  \\
& \times
\frac{\sinh{\left(\frac{1}{2}\sqrt{(w_{1}\epsilon_{1}-w_{2}\epsilon_{2})^2+
			4|w_{3}|^{2}\epsilon_{1}\epsilon_{2}}\right)}}{\frac{1}{2}\sqrt{(w_{1}\epsilon_{1}-w_{2}\epsilon_{2})^2+4 |w_{3}|^{2}\epsilon_{1}\epsilon_{2}}} \label{eq:pw1}
\end{align}
where $K=\prod_{1\le i,j \le 2}\frac{1}{\phi_{ij}}$, $
s_{i}=\frac{1}{\phi_{i1}}+\frac{1}{\phi_{i2}}$, and
$\epsilon_{i}=\frac{1}{\phi_{i1}}-\frac{1}{\phi_{i2}}$.
\end{theorem}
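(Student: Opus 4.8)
The plan is to pass from $\HH$ to $\W$ via the polar decomposition and to integrate out the resulting unitary factor. Since $\HH$ is a.s.\ invertible, I would write $\HH=\W^{1/2}\U$, with $\W^{1/2}\succ\0$ the Hermitian square root and $\U\in U(2)$; this is a bijection between $GL(2,\C)$ and $\{\W\succ\0\}\times U(2)$, and the associated Jacobian is \emph{constant}, so that $\dif\HH=c\,\dif\W\,\dif\mu(\U)$ with $\mu$ the normalized Haar measure. That the Jacobian carries no $\W$-dependence is forced by the invariance $\HH\mapsto\mathbf{V}\HH$ ($\mathbf{V}\in U(2)$) together with the scaling $\HH\mapsto r\HH$; its value is the degenerate $n=p$ case of the complex polar/Wishart volume computation, where the usual $\det\W$-power vanishes, and I would double-check it at the end against the i.i.d.\ Wishart density.

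The key algebraic step is to split the Gaussian exponent. Writing $\tfrac{1}{\phi_{i1}}=\tfrac12(s_i+\epsilon_i)$ and $\tfrac{1}{\phi_{i2}}=\tfrac12(s_i-\epsilon_i)$, the exponent $\sum_{ij}|\HH_{ij}|^2/\phi_{ij}$ separates into a part $\tfrac12(s_1w_1+s_2w_2)$, which comes from the diagonal of $\W=\HH\HH^\dagger$ and is therefore independent of $\U$, plus a part $\tfrac12\tr\bigl(\U\,\mathrm{diag}(1,-1)\,\U^\dagger\B\bigr)$ with $\B=\W^{1/2}\,\mathrm{diag}(\epsilon_1,\epsilon_2)\,\W^{1/2}$. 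The first part reproduces the exponential prefactor in (\ref{eq:pw1}) verbatim, and everything depending on $\U$ is now concentrated in a single trace.

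It then remains to evaluate $\int_{U(2)}e^{-\frac12\tr(\U\,\mathrm{diag}(1,-1)\,\U^\dagger\B)}\,\dif\mu(\U)$. The integrand depends on $\U$ only through $\M=\U\,\mathrm{diag}(1,-1)\,\U^\dagger$, a traceless Hermitian matrix with eigenvalues $\pm1$; as $\U$ ranges over $U(2)$ this $\M$ sweeps the Bloch sphere, and Haar measure pushes forward to the rotation-invariant probability measure there. Parametrizing $\M=\bigl(\begin{smallmatrix}\cos\theta & e^{i\varphi}\sin\theta\\ e^{-i\varphi}\sin\theta & -\cos\theta\end{smallmatrix}\bigr)$ and using this invariance to take $\B$ diagonal with eigenvalues $b_1,b_2$, the trace becomes $(b_1-b_2)\cos\theta$; the $\varphi$-integral is trivial and the $\theta$-integral, after $u=\cos\theta$, reduces to $\tfrac12\int_{-1}^1 e^{-\frac12(b_1-b_2)u}\,\dif u=\sinh\!\bigl(\tfrac12(b_1-b_2)\bigr)\big/\bigl(\tfrac12(b_1-b_2)\bigr)$. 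This is the ``non-classical'' unitary integral the authors flag, and reducing it to the Bloch sphere is what makes it elementary; I expect this to be the crux, with the constancy of the polar Jacobian the only other delicate point.

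Finally I would convert $b_1-b_2$ into the entries of $\W$ \emph{without ever computing $\W^{1/2}$ explicitly}: since $\sinh(x)/x$ is even, only $(b_1-b_2)^2=(\tr\B)^2-4\det\B$ matters, and by cyclicity and multiplicativity $\tr\B=\epsilon_1w_1+\epsilon_2w_2$ and $\det\B=\epsilon_1\epsilon_2\det\W=\epsilon_1\epsilon_2(w_1w_2-|w_3|^2)$, whence $(b_1-b_2)^2=(w_1\epsilon_1-w_2\epsilon_2)^2+4|w_3|^2\epsilon_1\epsilon_2$, exactly the radicand in (\ref{eq:pw1}); note this is $\geq0$ even when $\epsilon_1\epsilon_2<0$, since $\det\W>0$. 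Collecting the $K/\pi^4$ from $p(\HH)$, the constant Jacobian, and the normalized Haar integral fixes the prefactor at $K/\pi$, and the i.i.d.\ case $\epsilon_i=0$ (where the $\sinh$ factor tends to $1$), compared against the known density $\tfrac1\pi e^{-\tr\W}$, pins down the constant and completes the proof.
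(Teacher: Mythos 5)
Your proposal is correct, but it follows a genuinely different route from the paper's. The paper starts from the LQ decomposition $\HH=\mathbf{L}\mathbf{Q}$ with an explicit angular parametrization of $\mathbf{Q}$, integrates out the three phases and the angle $\theta$ by brute force (a Bessel-function expansion of the $a_1$-integral via \cite[Eq.\ 3.937.2]{gradshteyn1965table}, term-by-term integration in $\theta$, and resummation of the resulting double series into a $\sinh$), obtains the intermediate density of $\mathbf{L}$, and only then changes variables $\W=\mathbf{L}\mathbf{L}^\dagger$. You instead go directly to $\W$ via the polar decomposition $\HH=\W^{1/2}\U$, split the Gaussian exponent into the $\U$-independent part $\tfrac12(s_1w_1+s_2w_2)$ plus the single trace $\tfrac12\tr(\U\,\mathrm{diag}(1,-1)\,\U^\dagger\B)$ with $\B=\W^{1/2}\,\mathrm{diag}(\epsilon_1,\epsilon_2)\,\W^{1/2}$, and evaluate the remaining Haar integral by pushing it forward to the uniform measure on the Bloch sphere---in effect computing the $2\times2$ Harish-Chandra/Itzykson--Zuber-type integral by symmetry rather than by series. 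Your key steps all check out: the pushforward of Haar measure is rotation-invariant because conjugation acts through $SO(3)$; the reduced integral gives $\sinh\bigl(\tfrac12(b_1-b_2)\bigr)\big/\bigl(\tfrac12(b_1-b_2)\bigr)$; and $(b_1-b_2)^2=(\tr\B)^2-4\det\B=(w_1\epsilon_1-w_2\epsilon_2)^2+4|w_3|^2\epsilon_1\epsilon_2$ recovers the radicand exactly, with the nice bonus of explaining why it is automatically nonnegative. What your route buys is conceptual economy: no special functions, no double series, no intermediate density of $\mathbf{L}$, and a transparent origin for the $\sinh(x)/x$ kernel. What it costs is reliance on the factorization $\dif\HH=c\,\dif\W\,\dif\mu(\U)$ with constant $c$: be aware that your invariance-plus-scaling heuristic alone does not force constancy (a conjugation- and scale-invariant factor could still depend on eigenvalue ratios of $\W$), so the argument genuinely rests on the classical $m=n$ case of the complex polar/Wishart volume decomposition $\dif\HH\propto\det(\W)^{m-n}\dif\W\,\dif\mu(\U)$, which you correctly invoke; combined with the right-invariance showing the Jacobian is $\U$-free and your final i.i.d.\ check (which fixes $c=\pi^3$ and hence the prefactor $K/\pi$), the proof is complete. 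The paper's derivation is heavier but entirely self-contained at the level of explicit parametrizations.
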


\begin{proof}
See Appendix \ref{Ap:Th1Proof} for a complete proof. Briefly: $\mathbf{H}$ is decomposed as $\mathbf{H} = \mathbf{LQ}$, with $\mathbf{L}$ lower triangular and $\mathbf{Q}$ unitary, and after applying the corresponding Jacobian and integrating over the unitary group to eliminate $\mathbf{Q}$, we obtain the PDF of $\mathbf{L}$. Applying the variable transformation $\mathbf{W}=\mathbf{H}\mathbf{H}^{\dagger}=\mathbf{L}\mathbf{L}^{\dagger}$ 
leads to the result. 
\end{proof}

\begin{theorem} \label{th:lambda} Assume $\epsilon_i \neq 0$ for some $i$. The joint PDF of the (ordered) eigenvalues $\lambda_1 \ge \lambda_2 >0$ of $\mathbf{W}$ admits
\begin{align}
p(\lambda_{1},\lambda_{2})&= 2 K \, (\lambda_{1}-\lambda_{2})^{2}  \int_{0}^{\frac{\pi}{2}}e^{-\frac12 \nu(\lambda_{1},\lambda_{2},\kappa)} \, \nonumber \\
& \times \frac{\sinh\left(\frac12 \sqrt{\eta(\lambda_{1},\lambda_{2},\kappa)}\right)}{\sqrt{\eta(\lambda_{1},\lambda_{2},\kappa)}}\sin(2\kappa)d\kappa  \, ,
\label{eq:plambda_integral}
\end{align}
where
\begin{align*}
	\nu(\lambda_{1},\lambda_{2},\kappa) &= (\a \lambda_{1}+ \b \lambda_{2}) s_{1}
	+(\b \lambda_{1} + \a \lambda_{2} ) s_{2}  , \\
	\eta(\lambda_{1},\lambda_{2},\kappa) &=
	(\a \lambda_{1} + \b \lambda_{2})^{2} \epsilon_{1}^{2}
	+ (\b \lambda_{1} + \a \lambda_{2})^{2} \epsilon_{2}^{2} \nonumber \\
	&  +2(\a \b (\lambda_{1}-\lambda_{2})^{2}-\lambda_{1}\lambda_{2}) \epsilon_{1}\epsilon_{2} \,\, ,
\end{align*}
with $\a =\cos^{2}(\kappa)$ and $\b=\sin^{2}(\kappa)$ .
\end{theorem}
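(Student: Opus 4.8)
The plan is to obtain the eigenvalue density by integrating the matrix density $p(\mathbf{W})$ from Theorem~\ref{th:W} over the manifold of matrices sharing a fixed pair of eigenvalues $(\lambda_1,\lambda_2)$. Since $\mathbf{W}$ is $2\times 2$ Hermitian positive semidefinite, I would write its spectral decomposition $\mathbf{W} = \mathbf{U}\,\mathrm{diag}(\lambda_1,\lambda_2)\,\mathbf{U}^\dagger$ with $\mathbf{U}$ unitary, and change variables from the three real matrix coordinates $(w_1,w_2,\mathrm{Re}\,w_3,\mathrm{Im}\,w_3)$ to the eigenvalues $(\lambda_1,\lambda_2)$ together with the eigenvector angles. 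The Jacobian of this transformation is the classical one for $2\times 2$ Hermitian matrices, producing the Vandermonde factor $(\lambda_1-\lambda_2)^2$ that already appears in \eqref{eq:plambda_integral}. The phase degrees of freedom of $\mathbf{U}$ do not enter $p(\mathbf{W})$ and integrate out to a constant absorbed into the normalization, so effectively only a single real angle $\kappa$ parametrizing the eigenvector rotation survives.

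Concretely, I would parametrize the relevant part of $\mathbf{U}$ by an angle $\kappa \in [0,\tfrac{\pi}{2}]$ so that the entries of $\mathbf{W}$ become $w_1 = \lambda_1\cos^2\kappa + \lambda_2\sin^2\kappa = \a\lambda_1 + \b\lambda_2$, $w_2 = \lambda_1\sin^2\kappa + \lambda_2\cos^2\kappa = \b\lambda_1 + \a\lambda_2$, and $|w_3|^2 = (\lambda_1-\lambda_2)^2\cos^2\kappa\sin^2\kappa = \a\b(\lambda_1-\lambda_2)^2$, matching the definitions $\a=\cos^2\kappa$, $\b=\sin^2\kappa$. Substituting these into the exponent $\tfrac12(w_1 s_1 + w_2 s_2)$ immediately yields the function $\nu(\lambda_1,\lambda_2,\kappa)$. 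The harder bookkeeping is the argument of the $\sinh$: substituting into $(w_1\epsilon_1 - w_2\epsilon_2)^2 + 4|w_3|^2\epsilon_1\epsilon_2$ and expanding, I expect the cross terms to reorganize into $(\a\lambda_1+\b\lambda_2)^2\epsilon_1^2 + (\b\lambda_1+\a\lambda_2)^2\epsilon_2^2$ plus a mixed term carrying $\epsilon_1\epsilon_2$; using $\a+\b=1$ and $\a\b = \cos^2\kappa\sin^2\kappa$ should collapse this mixed term into $2(\a\b(\lambda_1-\lambda_2)^2 - \lambda_1\lambda_2)\epsilon_1\epsilon_2$, giving exactly $\eta(\lambda_1,\lambda_2,\kappa)$.

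The main obstacle I anticipate is pinning down the precise measure in the angular variable and its range, together with the constant prefactor. One must verify that the eigenvector integration over the full unitary group reduces, after factoring out the irrelevant phases, to an integral over $\kappa\in[0,\tfrac{\pi}{2}]$ with weight proportional to $\sin(2\kappa)\,d\kappa$; this weight is essentially the Haar-induced measure on the relevant circle, and the factor $\sin(2\kappa) = 2\cos\kappa\sin\kappa$ arises naturally as the Jacobian between $|w_3|$ and $\kappa$ at fixed eigenvalues. Care is needed because a single fundamental domain for $\kappa$ must cover each distinct $\mathbf{W}$ exactly once given the ordering $\lambda_1\ge\lambda_2$, so I would check that restricting to $[0,\tfrac{\pi}{2}]$ (rather than $[0,\pi]$ or the full circle) is the correct choice and accounts for the overall factor of $2K$ in \eqref{eq:plambda_integral}. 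Once the change of variables and its Jacobian are established, the remaining step is purely algebraic simplification of the two functions $\nu$ and $\eta$, which I do not expect to present any conceptual difficulty beyond the expansion already outlined.
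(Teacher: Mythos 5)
Your proposal is correct and follows essentially the same route as the paper: the paper's proof also uses the eigendecomposition $\mathbf{W}=\mathbf{U}\mathbf{\Lambda}\mathbf{U}^\dagger$ with $\mathbf{U}$ parametrized by $\kappa\in[0,\pi/2]$ and a phase $\psi\in[0,2\pi]$, the Jacobian $\tfrac12(\lambda_1-\lambda_2)^2\sin(2\kappa)$, the same identification $w_1=\a\lambda_1+\b\lambda_2$, $w_2=\b\lambda_1+\a\lambda_2$, $|w_3|^2=\a\b(\lambda_1-\lambda_2)^2$, and integration over $\psi$ to produce the constant $2K$. The algebraic collapse of the mixed term into $2(\a\b(\lambda_1-\lambda_2)^2-\lambda_1\lambda_2)\epsilon_1\epsilon_2$ that you anticipate indeed goes through via $w_1w_2=\a\b(\lambda_1-\lambda_2)^2+\lambda_1\lambda_2$, so your outline matches the paper's argument step for step.
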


\begin{proof}
See Appendix \ref{Ap:Th2Proof}.
\end{proof}

\begin{remark}[Equivalence of the variance profile]
\label{rem:sym}
Let $\mathbf{\Phi}=(\phi_{ij})_{i,j=1,2}$ be the matrix defining the variance profile associated with $\mathbf{W}$. Since $\mathbf{H}\mathbf{H}^\dagger$ and $\mathbf{H}^\dagger\mathbf{H}$ share the same eigenvalues, it is equivalent to consider $\mathbf{\Phi}$ or $\mathbf{\Phi}^\mathrm{T}$. 
\end{remark}

\subsection{Partially asymmetric variances}
Our main results---the PDF of $\mathbf{W}$ and of its eigenvalues--- have been given for a general variance profile. Consider now the special case where asymmetry in the variances is only partially allowed; specifically, consider $\phi_{i1} = \phi_{i2}$ for some $i$. 
Assume without loss of generality (by symmetry of the PDF) that $\phi_{21} = \phi_{22}$ and, therefore, $\epsilon_2 = 0$.

\begin{corollary} \label{cor:epsilon}
	Consider the case $\phi_{21} = \phi_{22} \triangleq \phi_3$ and define $\phi_1 \triangleq \min(\phi_{11},\phi_{12})$, $\phi_2 \triangleq \max(\phi_{11},\phi_{12})$, with $\phi_1 \neq \phi_2$ (hence $\epsilon_1 \neq 0$). The PDF of $\mathbf{W}=\mathbf{H}\mathbf{H}^{\dagger} \succeq \mathbf{0}$ admits
	\begin{align}
	\label{eq:w_part}
	p(\mathbf{W})= \frac{K}{\pi} e^{-\frac{1}{2}\left(w_{1}s_{1}+w_{2}s_{2}\right)} \,
	\frac{\sinh{\left(\frac{1}{2}w_1|\epsilon_1|\right)}}{\frac{1}{2}w_1|\epsilon_1|} ,
	\end{align}
	and the joint PDF of its (ordered) eigenvalues reduces to
	\begin{align}
	p(\lambda_{1},\lambda_{2})=& \nonumber \\
	  & \hspace{-1.3cm} \frac{ \, e^{-\frac{\lambda_1+\lambda_2}{\phi_3}}  }{(\phi_2-\phi_1)\phi_3^2} \det  \left(\lambda_i^{j-1} \right)_{i,j=1,2} \det \left( g(\lambda_j)^{i-1} \right)_{i,j=1,2}, 
	 \label{eq:lambda_part_case}
\end{align}
with $g(x) \triangleq \Ei((1/\phi_3-1/\phi_2)x) - \Ei((1/\phi_3-1/\phi_1)x)$ and $\Ei(x) = - \int_{-x}^{\infty} \frac{e^{-t}}{t}dt$ the exponential integral function.

Furthermore, the cumulative distribution function (CDF) of the minimum eigenvalue of $\mathbf{W}$ and the CDF of the maximum eigenvalue of $\mathbf{W}$ are given in (\ref{eq:Fmin}) and (\ref{eq:Fmax}) (top of the next page) for $\phi_1, \phi_2, \phi_3$ all distinct.

\end{corollary}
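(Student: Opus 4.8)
The plan is to derive all four results by specializing the general expressions of Theorems~\ref{th:W} and \ref{th:lambda} to the case $\epsilon_2 = 0$, after which the remaining integrals become elementary. For the matrix PDF (\ref{eq:w_part}), I would simply set $\epsilon_2=0$ in (\ref{eq:pw1}): the radicand $(w_1\epsilon_1 - w_2\epsilon_2)^2 + 4|w_3|^2\epsilon_1\epsilon_2$ then collapses to $(w_1\epsilon_1)^2$, so the argument of the $\sinh$ reduces to $\tfrac12|w_1\epsilon_1| = \tfrac12 w_1|\epsilon_1|$ (using $w_1\ge0$), which gives (\ref{eq:w_part}) immediately; the hypothesis $\epsilon_1\neq0$ guarantees the ratio is well defined.

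For the eigenvalue density (\ref{eq:lambda_part_case}), I would start from the integral (\ref{eq:plambda_integral}) and again put $\epsilon_2=0$, so that $\eta(\lambda_1,\lambda_2,\kappa) = (\a\lambda_1+\b\lambda_2)^2\epsilon_1^2$ becomes a perfect square and $\sqrt{\eta} = (\a\lambda_1+\b\lambda_2)|\epsilon_1|$. Writing $A := \a\lambda_1+\b\lambda_2$ and using $\a+\b=1$ together with $s_2 = 2/\phi_3$, the exponent splits as $\tfrac12\nu = (\lambda_1+\lambda_2)/\phi_3 + \tfrac12 A(s_1-s_2)$, pulling the $\kappa$-independent factor $e^{-(\lambda_1+\lambda_2)/\phi_3}$ out of the integral. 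Expanding $\sinh$ into exponentials and substituting $t=\sin^2\kappa$ (so that $\sin(2\kappa)\,d\kappa = dt$ and $A = \lambda_1 - t(\lambda_1-\lambda_2)$ is affine in $t$) reduces the problem to integrals of the form $\int_0^1 e^{-Ac}/A\,dt$ with $c\in\{1/\phi_1-1/\phi_3,\,1/\phi_2-1/\phi_3\}$. The further substitution $v=Ac$ turns each of these into a difference of exponential integrals, and after collecting terms they assemble into exactly $g(\lambda_1)-g(\lambda_2)$ divided by $(\lambda_1-\lambda_2)$. Simplifying the constant $K/|\epsilon_1| = 1/[\phi_3^2(\phi_2-\phi_1)]$ and recognizing $(\lambda_1-\lambda_2)(g(\lambda_1)-g(\lambda_2))$ as the product of the two $2\times2$ determinants in (\ref{eq:lambda_part_case}) completes this part.

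For the CDFs (\ref{eq:Fmin}) and (\ref{eq:Fmax}), I would integrate the joint density over the appropriate regions: $F_{\lambda_{\min}}(x)=\int_0^x\!\int_{\lambda_2}^{\infty}p(\lambda_1,\lambda_2)\,d\lambda_1 d\lambda_2$ for the minimum, and $F_{\lambda_{\max}}(x)=\int_0^x\!\int_0^{\lambda_1}p(\lambda_1,\lambda_2)\,d\lambda_2 d\lambda_1$ for the maximum. After expanding $(\lambda_1-\lambda_2)(g(\lambda_1)-g(\lambda_2))$, I would integrate term by term, handling each exponential-integral factor by integration by parts via $\tfrac{d}{dx}\Ei(ax)=e^{ax}/x$. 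This trades every $\Ei$ against the accompanying polynomial-times-exponential factor, leaving only elementary integrals $\int x^m e^{-bx}\,dx$ plus residual $\Ei$ terms, whose assembly yields the stated closed forms.

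The routine substitutions dispatch the two densities; the main obstacle is the CDF computation. The repeated integration by parts proliferates exponential-times-$\Ei$ cross terms whose bookkeeping is delicate, and one must check that all boundary contributions as $\lambda_1\to\infty$ vanish. This holds because a generic surviving term behaves like $e^{-\lambda_1/\phi_3}e^{a_j\lambda_1}=e^{-\lambda_1/\phi_j}\to0$ (where $a_j = 1/\phi_3-1/\phi_j$ and one uses the large-argument asymptotics $\Ei(a_j\lambda_1)\sim e^{a_j\lambda_1}/(a_j\lambda_1)$), so the exponential decay always dominates; the hypothesis that $\phi_1,\phi_2,\phi_3$ be distinct is precisely what keeps the denominators $1/\phi_i-1/\phi_j$ nonzero throughout.
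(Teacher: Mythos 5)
Your proposal is correct, and for the two density formulas it follows exactly the paper's route: specializing (\ref{eq:pw1}) and (\ref{eq:plambda_integral}) to $\epsilon_2=0$, with your evaluation of the $\kappa$-integral (substituting $t=\sin^2\kappa$, then $v=Ac$, yielding $[g(\lambda_1)-g(\lambda_2)]/(\lambda_1-\lambda_2)$ and the constant $K/|\epsilon_1|=1/[(\phi_2-\phi_1)\phi_3^2]$) being precisely the computation the paper leaves implicit when it says to apply the same steps as in Appendix~\ref{Ap:Th2Proof}. The genuine difference is in the CDFs. The paper computes $F_{\lambda_\mathrm{min}}$ through the complement, $1-\P(\lambda_\mathrm{min}>x)=1-\int_x^\infty\int_x^{\lambda_1}p(\lambda_1,\lambda_2)\,d\lambda_2\,d\lambda_1$: the survival region is the ordered part of the product domain $[x,\infty)^2$, and since the integrand is a product of two determinants, the Cauchy--Binet (Andr\'eief) identity \cite{andreief1883note} collapses the double integral into a single $2\times2$ determinant of \emph{decoupled} one-dimensional integrals $\int_x^\infty \lambda^{i-1}g(\lambda)^{j-1}e^{-\lambda/\phi_3}\,d\lambda$, each handled by \cite[Eq.\ 5.231.2]{gradshteyn1965table} and one integration by parts. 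You instead integrate directly over the event region $\{\lambda_2\le x,\ \lambda_1\ge\lambda_2\}$, whose nested limits prevent this decoupling: the inner integral leaves $\Ei$- and $g$-terms evaluated at $\lambda_2$, and the outer integral then involves products $e^{-\lambda_2/\phi_3}\Ei(c\lambda_2)$ together with logarithmic contributions from the finite limit $g(0^+)=\log\left|\phi_1(\phi_2-\phi_3)/(\phi_2(\phi_1-\phi_3))\right|$, which must ultimately cancel since (\ref{eq:Fmin}) contains no such term. Your plan does go through---the integrals are all elementary or of exponential-integral type, and your verification that the boundary terms at infinity vanish (using the large-argument asymptotics of $\Ei$ and the distinctness of $\phi_1,\phi_2,\phi_3$) is the right check---but the bookkeeping is substantially heavier than necessary; passing to the complement so that the integration domain factorizes, and then invoking Cauchy--Binet, is what buys the paper its short derivation, and the same device applies verbatim to $F_{\lambda_\mathrm{max}}$ since $[0,x]^2$ is also a product domain.
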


\begin{figure*}[!]
\normalsize
\setcounter{equation}{5}
\begin{equation}
\label{eq:Fmin}
F_{\lambda_\mathrm{min}}(x) = \mathbb{P}(\lambda_\mathrm{min}\le x) =1-\frac{e^{-x/\phi_3}}{\phi_2 - \phi_1}\left( \phi_2e^{-x/\phi_2} - \phi_1e^{-x/\phi_1} + x \left( \Ei(-x/\phi_2)- \Ei(-x/\phi_1) \right)\right)
\end{equation}
\begin{align}
\label{eq:Fmax}
F_{\lambda_\mathrm{max}}(x)=\mathbb{P}(\lambda_\mathrm{max}\le x)
 &= \frac{1}{\phi_2-\phi_1} \biggl((1-e^{-x/\phi_3})\left(\phi_2(1-e^{-x/\phi_2})-\phi_1(1-e^{-x/\phi_1}) \right) \biggr. \nonumber\\
& \hspace{+1mm} \biggl. + x e^{-x/\phi_3}\left(-g(x)+\Ei(-x/\phi_2)-\Ei(-x/\phi_1)+\log \left| \frac{\phi_3-\phi_2}{\phi_3-\phi_1}\right|\right) \biggr)
\end{align}	
\setcounter{equation}{7}
\vspace*{4pt}
\hrulefill
\end{figure*}

\begin{proof}
A sketch of the proof is given in Appendix \ref{app:sec}.
\end{proof}

Note the remarkable simplicity of both the joint eigenvalue PDF and the marginal CDFs of the extreme eigenvalues in this special case, which retain in part the flexibility of the general model, with $3$ arbitrary variances rather than $4$.

\begin{remark}[On the tail of the extreme eigenvalue distribution]
\label{rem:F_exp}
In the setting of \emph{Corollary} \ref{cor:epsilon}, we have the following expansions for $x$ in the neighborhood of $0$:
\begin{align}
\label{eq:fmin_exp}
F_{\lambda_\mathrm{min}}(x) &=  \left(\frac{1}{\phi_3}+\frac{\log \phi_2-\log \phi_1}{\phi_2-\phi_1} \right) x + o(x) , \\
\label{eq:fmax_exp}
F_{\lambda_\mathrm{max}}(x) &= \frac{1}{12}\frac{1}{\phi_1 \phi_2 \phi_3^2}x^4  + o(x^4).
\end{align}
These results are obtained by basic algebra, and making use of \cite[eq. 8.214]{gradshteyn1965table}. The expressions are remarkably simple and shed light on how the variance profile affects the tail of the extreme eigenvalue distributions. For example, assuming the total variance is normalized, (\ref{eq:fmax_exp}) suggests that a strong asymmetry in the variance profile---i.e., some variances substantially smaller than others---leads to a more heavy-tailed distribution $F_{\lambda_\mathrm{max}}$, as compared to a more uniform profile. The insights brought by these simple expressions are further illustrated in Section \ref{sec:app}, where we use $F_{\lambda_\mathrm{min}}$ to study the outage data rate of a communication system with distributed antennas.    
\end{remark}

\subsection{Connection to Kronecker correlated models}
It is instructive to relate the random matrix $\mathbf{W}$ to Kronecker correlated models, which are commonly considered in multi-antenna communications \cite{hanlen2003capacity,mckay2007performance,shin2006capacity,chiani2003capacity}.  For such models, the channel matrix can be described as
\begin{align*}
\mathbf{H}_K = \mathbf{R}^{1/2}\mathbf{H}_w \mathbf{T}^{1/2},
\end{align*}
where $\left(\mathbf{H}_w \right)_{ij}$ are independent $\mathcal{CN}(0,1)$, while $\mathbf{R}$ and $\mathbf{T}$ are non-negative definite. Denoting $\mathbf{U}_\mathbf{R}$ (resp.\@ $\mathbf{U}_\mathbf{T}$) an eigenbasis of $\mathbf{R}$ (resp.\@ $\mathbf{T}$) and $r_i$ (resp.\@ $t_i$) the $i$-th eigenvalue of $\mathbf{R}$ (resp.\@ $\mathbf{T}$), $\mathbf{H}_K$ can be written (in the $2\times 2$ case) as \cite{weichselberger2006stochastic}
\begin{align*}
\mathbf{H}_K = \mathbf{U}_\mathbf{R} \left( {r_1^{1/2} \choose r_2^{1/2}}(t_1^{1/2}, t_2^{1/2}) \odot \mathbf{H}_w \right) \mathbf{U}_\mathbf{T},
\end{align*}
where $\odot$ denotes the Hadamard (entry-wise) matrix product.
  
Furthermore, it can be shown that the eigenvalue distribution of $\mathbf{W}_K$, described as $\mathbf{W}_K =  \mathbf{H}_K \mathbf{H}_K^\dagger$, depends on $\mathbf{R}$ and $\mathbf{T}$ only through their eigenvalues (see \cite{hanlen2003capacity,mckay2007performance,shin2006capacity}). Hence our result in \emph{Theorem} \ref{th:lambda} subsumes the eigenvalue distribution of Kronecker correlated models as a special case. 

\section{Outage performance of a dual-user communication system with distributed antennas}\label{sec:app}
We now demonstrate the usefulness of the mathematical results exposed above, through a concrete communications application example. Consider a communication system in which 2 single-antenna users (transmitters) communicate with a receiver comprising 2 distributed antennas. Rayleigh fading is assumed, with shadowing neglected, in which the communication channel is of the form $\mathbf{H}$ in (\ref{eq:Basic}) with variance profile $\phi_{ij}=D_{ij}^{-\nu}$, where $\nu$ is the path loss exponent and  $D_{ij}$ the distance between transmit antenna $j$ and receive antenna $i$. Thus, the placement of the antennas determines the channel variance profile. For instance, if both transmitters (i.e., users) are located at equal distance from receive antenna $i$, then $\phi_{i1}=\phi_{i2}$, which corresponds to the setting of \emph{Corollary} \ref{cor:epsilon}. 

We further assume that the receiver has perfect knowledge of $\mathbf{H}$, while the transmitters ignore such knowledge and send independent data with a total transmit power $P$.  The noise at each receiver is assumed independent $\mathcal{C N}(0, \sigma_n^2$), and we define the transmit signal to noise ratio (SNR) as $\rho \triangleq P /\sigma_n^2$. We further assume that the total power gain of the channel is fixed, with $\mathbb{E} \left[ {\rm tr}\left( \mathbf{H} \mathbf{H}^\dagger \right) \right]= \sum_{1 \le i,j \le 2} \phi_{ij} =4$.

Denoting $\mathbf{x}$ the vector of transmitted signals and $\mathbf{n}$ the additive noise, the received signal $\mathbf{y}$ takes the form
\begin{align*}
\mathbf{y}=\mathbf{H}\mathbf{x} + \mathbf{n}.
\end{align*} 
For detection, a linear zero-forcing receiver is considered. Such receivers are popular because of their low complexity \cite{gore2002transmit}, and their performance is known to approach that of minimum-mean squared error receivers at high SNR \cite{kumar2009asymptotic}. The estimate $\hat{\mathbf{x}}$ of the transmitted signal $\mathbf{x}$ then becomes
\begin{align*}
\hat{\mathbf{x}} =  \left( \mathbf{H}^\dagger\mathbf{H} \right)^{-1} \mathbf{H}^\dagger \mathbf{ y} = \mathbf{x} + \left(\mathbf{H}^\dagger\mathbf{H}  \right)^{-1} \mathbf{H}^\dagger  \mathbf{n},
\end{align*}
and the post-processing SNR for the $i$-th user is \cite{heath2005multimode}
\begin{align*}
\mathrm{SNR}_i =  \frac{\rho}{\left[\left(\mathbf{H}^\dagger \mathbf{H}\right)^{-1}\right]_{ii}} 
= \frac{\rho}{\left[\mathbf{W}^{-1}\right]_{ii}}.
\end{align*}

Here, we are interested in the outage data rate, defined as the largest transmission rate (in bits/s/Hz) that can be reliably guaranteed for both users (simultaneously) at least $(1-\epsilon)\times 100\%$ of the time, i.e.,
\begin{align}
R_\mathrm{out}(\epsilon)=\underset{R\ge0}{ \sup} \left(R:P_\mathrm{out}(R) < \epsilon\right),
\label{eq:def_r}
\end{align}
with $\epsilon$ being the prescribed maximum outage level, and $P_\mathrm{out}(R)$ denoting the outage probability for a given target rate $R$. That is, $P_\mathrm{out}(R)$ reflects the probability that a reliable transmission at rate $R$ cannot be guaranteed to both users, given by
\begin{align*}
P_\mathrm{out}(R) = \mathbb{P}\left(\log_2\left(1+\mathrm{SNR}_\mathrm{min} \right) \le R \right) ,
\end{align*}
where $\mathrm{SNR}_\mathrm{min}=\min(\mathrm{SNR}_1,\mathrm{SNR}_2)$.  Since \cite{heath2005multimode} $\frac{1}{\left[\mathbf{W}^{-1}\right]_{ii}} \ge  \lambda_\mathrm{min}, i = 1, 2,$  
with $\lambda_\mathrm{min}$ the minimum eigenvalue of $\mathbf{W}$, it follows that $\mathrm{SNR}_\mathrm{min} \ge \rho \lambda_\mathrm{min}$, which yields the upper bound
\begin{align*}
P_\mathrm{out}(R) &\le  \mathbb{P} \left(\log_2\left(1+\rho\lambda_\mathrm{min} \right) \le R \right) \nonumber \\
&= F_{\lambda_\mathrm{min}}\left(\frac{1}{\rho}  \left(2^R-1\right) \right) \; .
\end{align*}
Considering now the setting of \emph{Corollary} \ref{cor:epsilon} ($\phi_{21}=\phi_{22}$) and a small outage level, we can use the expansion of $F_{\lambda_\mathrm{min}}(\cdot)$ (Remark \ref{rem:F_exp}) to write
\begin{align*}
P_\mathrm{out}(R) 
\lesssim \frac{1}{\rho}\left(2^{R}-1\right) \underbrace{\left(\frac{1}{\phi_3}+\frac{\log \phi_2-\log \phi_1}{\phi_2-\phi_1} \right)}_{a(\mathbf{\Phi})} \; ,
\end{align*}
where the influence of the variances is isolated through the factor $a(\mathbf{\Phi})$.  Also note that if $\phi_2 \to \phi_1$, $a(\mathbf{\Phi}) \to 1/\phi_3 + 1/\phi_1$.

The results above then suggest the following lower bound for the outage data rate $R_\mathrm{out}(\epsilon)$:
\begin{align}
R_\mathrm{out}(\epsilon) \ge \log_2 \left( 1+ \rho F_{\lambda_\mathrm{min}}^{-1}(\epsilon) \right) \triangleq \check {R}_\mathrm{out}(\epsilon),
\label{eq:cout_exp}
\end{align}
where $F_{\lambda_\mathrm{min}}^{-1}$ denotes the inverse function of $F_{\lambda_\mathrm{min}}$, while in the setting of \emph{Corollary} \ref{cor:epsilon}, 
\begin{align}
R_\mathrm{out}(\epsilon) \gtrsim  \log_2 \left(1+ \rho \frac{\epsilon}{a(\mathbf{\Phi})} \right) \triangleq {\tilde R}_\mathrm{out}(\epsilon)
. 
\label{eq:cout_exp2}
\end{align}

Fig.\@ \ref{fig:r_vs_snr} shows the outage data rate as a function of the SNR $\rho$, for an asymmetric variance profile $\mathbf{\Phi}=\left( \begin{smallmatrix} 
  0.01 & 0.99\\
  1.5 & 1.5 
\end{smallmatrix} \right)$, and three maximum outage levels $\epsilon=1\%, 10\%$ and $50\%$. For each maximum outage level, we plot (i) the true empirical rate $R_\mathrm{out}(\epsilon)$ in (\ref{eq:def_r})---obtained from the average over $10^5$ realizations of the channel, (ii) the lower bound $\check {R}_\mathrm{out}(\epsilon)$ in (\ref{eq:cout_exp})---where we invert $F_{\lambda_\mathrm{min}}$ from (\ref{eq:Fmin}) numerically, and (iii) the approximate analytical lower bound ${\tilde R}_\mathrm{out}(\epsilon)$ in (\ref{eq:cout_exp2}).

\begin{figure}[h]
        \centering
        \includegraphics[width=0.85\columnwidth]{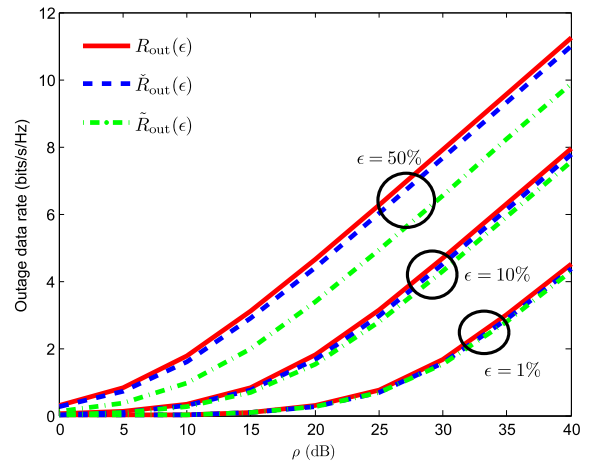}
        \caption{\small Outage data rate vs.\@ SNR for different maximum outage levels, $\epsilon=1\%, 10\%$ and $50\%$, with an asymmetric variance profile.}
\label{fig:r_vs_snr}
\end{figure}

Notice the tightness of the lower bound $\check R_\mathrm{out}(\epsilon)$ for all the considered maximum outage levels and all SNRs. Moreover, the analytical (approximate) bound $\tilde R_\mathrm{out}(\epsilon)$, based on the expansion (\ref{eq:fmin_exp}), shows excellent accuracy for small outage levels ($\epsilon=1\%, 10\%$) as expected, since it corresponds to evaluating $F_{\lambda_\mathrm{min}}(x)$ fairly deep in the tail. However, for higher outage levels (e.g., $\epsilon=50\%$), this bound becomes less reliable.
 
For $\epsilon$ small, the effect of the variance profile on the outage data rate can be analyzed from the approximate bound $\tilde R_\mathrm{out}(\epsilon)$, given its remarkable tightness and simplicity. 
Recalling the normalization on the total power gain of the channel, i.e., $\phi_1+\phi_2 = 4-2\phi_3$, it is straightforward to verify that $a(\mathbf{\Phi})$ is continuous in the parameters $\phi_1,\phi_2,\phi_3$,
and that, for $\phi_3$ fixed, the mapping $(\phi_1,\phi_2) \mapsto a(\mathbf{\Phi})$ is minimum when $\phi_1=\phi_2$ and maximum in the limit $\phi_1 \rightarrow 0$ (hence $\phi_2 \rightarrow 4-2\phi_3$). This immediately implies that, for $\phi_3$ fixed, the outage data rate is maximum when $\phi_1=\phi_2=2-\phi_3$, which represents  the ``most symmetric'' profile under the total variance normalization, and that any departure from such symmetry entails a performance loss. To quantify the range of such loss, we now consider the two extremes cases, ``symmetric'' $\mathbf{\Phi}^\mathrm{sym} = ( \begin{smallmatrix} 
2-\phi_3 & 2-\phi_3\\
\phi_3 & \phi_3 
\end{smallmatrix} )$ and ``asymmetric'' $\mathbf{\Phi}^\mathrm{asym} = ( \begin{smallmatrix} 
0.01 & 4-2\phi_3-\phi_1\\
\phi_3 & \phi_3 
\end{smallmatrix} )$ profiles, and define the fractional loss in the outage data rate due to asymmetry, i.e., ${ \tilde{FL}}(\epsilon) \triangleq \frac{{\tilde R}_\mathrm{out}^\mathrm{sym}(\epsilon)-{\tilde R}_\mathrm{out}^\mathrm{asym}(\epsilon)}{{\tilde R}_\mathrm{out}^\mathrm{sym}(\epsilon)}$.
In Table \ref{tab:r_vs_phi}, for $\epsilon = 1\%$, we report the value of $\tilde{FL}(\epsilon)$ for increasing values of $\phi_3 \in (0,2)$, along with the corresponding true fractional loss $FL (\epsilon)$, computed using the true outage data rates, obtained empirically by averaging over $10^5$ realizations. The numbers reveal a striking degradation in the outage data rate due to asymmetry, e.g., up to $36\%$ loss for $\phi_3 = 1.6$ and SNR$=30$dB.

\renewcommand{\arraystretch}{1.5}
\begin{table}[htb]
\caption{Fractional Loss of the Outage Data Rate Associated with Asymmetric Variance Profiles, at SNR$=30\mathrm{dB}$ and $\epsilon=1\%$.}
\label{tab:r_vs_phi}
\begin{center}
\begin{tabular}{c|cccccccc}
   $\phi_3$ &  0.01 & 0.5 & 1 & 1.2 & 1.4 & 1.6 & 1.8 & 1.95 \\
\hline
$FL  (\epsilon)$ & 0\% & 18\% & 28\% & 30\% & 32\% &\bf{36\%} & 33\% & 24\%\\
${ \tilde{FL}}(\epsilon)$  & 1\% & 19\% & 27\% & 29\% & 33\% & \bf{33\%} & 33\% & 23\%\\
\end{tabular}
\end{center}
\end{table}

Physically, it implies that, assuming the two transmitting users are equidistant from receive antenna 2, their position relative to receive antenna 1 is crucial: if the distances from the two users to receive antenna 1 are very different (asymmetric case, $\phi_2/\phi_1 \gg 1$), a significantly lower outage data rate is expected as compared to the case where both users are equidistant from receive antenna 1 (symmetric case, $\phi_1=\phi_2$).

\appendix

\subsection{Proof of Theorem \ref{th:W}}  \label{Ap:Th1Proof}

We start with the \textit{LQ} decomposition $\mathbf{H}=\mathbf{LQ}$, with $\mathbf{L}$ lower triangular and $\mathbf{Q}$ unitary. $\mathbf{L}$ and $\mathbf{Q}$ are parametrized as \cite{zyczkowski1994random,fraidenraich2007mimo,fraidenraich2008mimo}: $$\mathbf{L} = \left( \begin{smallmatrix} 
  l_{11} & 0\\
  l_{21R}+j l_{21I} & l_{22} 
\end{smallmatrix} \right)$$
and
$$  \mathbf{Q}=\left( \begin{smallmatrix} 
  e^{j a_{1}}\cos(\theta) & e^{j a_{2}}\sin(\theta)\\
  -e^{j(a_{3}-a_{2})}\sin(\theta) & e^{j(a_{3}-a_{1})}\cos(\theta) 
\end{smallmatrix} \right)$$ with: $l_{21R},l_{21I} \in \mathbb{R}$, $ l_{11}, l_{22} \ge 0 $, $0 \leqslant \theta \leqslant \pi/2$ and $0 \leqslant a_{1},a_{2},a_{3} \leqslant 2\pi$.
The associated Jacobian is $l_{11}^{3}l_{22}\sin(\theta)\cos(\theta)$. This, along with the PDF of $\mathbf{H}$
\begin{align}
p(\mathbf{H}) &= \frac{K}{\pi^4}\prod_{1\le i,j \le 2}  e^{ - |h_{ij}|^2 / \phi_{ij}  }, \nonumber
\end{align}
gives an expression for the joint PDF of $(\mathbf{L}, \mathbf{Q})$.
The marginal PDF of $\mathbf{L}$ then follows by integrating over the parameters $0 \leqslant \theta \leqslant \pi/2$, $0 \leqslant a_{1},a_{2},a_{3} \leqslant 2\pi$, which leads to (\ref{eq:integral_L}) (top of the next page), where
\begin{align}
{\cal I}_\theta =  \int_{0}^{2\pi}\int_{0}^{2\pi}\int_{0}^{2\pi} e^{p_\theta(a_{2},a_{3}) \cos{a_{1}} + q_\theta(a_{2},a_{3}) \sin{a_{1}} } da_1 da_2 da_3  \nonumber
\end{align}
with
\begin{align}
p_\theta(a_{2},a_{3}) &= l_{22}\sin(2\theta)\left(\frac{1}{\phi_{21}}-\frac{1}{\phi_{22}}\right) \nonumber \\
&  \times (l_{21R}\cos{(a_{3}-a_{2})}+l_{21I}\sin{(a_{3}-a_{2})}) \nonumber  \\
q_\theta(a_{2},a_{3}) &= l_{22}\sin(2\theta)\left(\frac{1}{\phi_{21}}-\frac{1}{\phi_{22}}\right) \nonumber \\
&  \times (l_{21R}\sin{(a_{3}-a_{2})}-l_{21I}\cos{(a_{3}-a_{2})}) \; . \nonumber 
\end{align}
\begin{figure*}[htb]
\normalsize
\setcounter{equation}{12}
\begin{align}
p(\mathbf{L})&=\frac{l_{11}^{3}l_{22}}{2\pi^{4}} K \int_0^{\pi/2}   e^{ -l_{11}^{2} \left(\frac{\cos^{2}(\theta)}{\phi_{11}} + \frac{\sin^{2}(\theta)}{\phi_{12}}\right) -|l_{21}|^{2} \left(\frac{\cos^{2}(\theta)}{\phi_{21}} + \frac{\sin^{2}(\theta)}{\phi_{22}}\right) - l_{22}^{2}\left( \frac{\sin^{2}(\theta)}{\phi_{21}}+\frac{\cos^{2}(\theta)}{\phi_{22}}\right) }
 \sin(2\theta)  \,{{\cal I}_\theta} \, d\theta \label{eq:integral_L}
\end{align}
\vspace*{4pt}
\hrulefill
\end{figure*}
We now show that the integral ${\cal I}_\theta$ can be computed explicitly. The result \cite[Eq. 3.937.2]{gradshteyn1965table} allows us to integrate over $a_1$ giving
\begin{align}
{\cal I}_{\theta} 
=&\int_{[0,2\pi]^2} 2\pi I_{0}(\sqrt{p_\theta(a_{2},a_{3})^{2}+q_\theta(a_{2},a_{3})^{2}})  d a_{2} d a_{3}, \nonumber  
\end{align}
where $I_{0}\left( \cdot \right)$ is a modified Bessel function of the first kind.  Using the expansion, $I_{0}\left( x \right)=\sum_{k=0}^{\infty}{\frac{\left( x^2 / 4 \right)^{k}}{ (k!)^2 }}$ and noting that $p_\theta(a_{2},a_{3})^{2}+q_\theta(a_{2},a_{3})^{2}=\left(|l_{21}|l_{22}\sin(2\theta)\left(\frac{1}{\phi_{21}}-\frac{1}{\phi_{22}}\right)\right)^{2}$ does not depend on $a_{2},a_{3}$, the remaining integrals are easy to evaluate, giving
\begin{align*}
{\cal I}_{\theta}=(2\pi)^{3}\sum_{k=0}^{\infty}  \frac{\left( |l_{21}|l_{22}\sin{(2\theta)\begin{vmatrix}\frac{1}{\phi_{21}}-\frac{1}{\phi_{22}}\end{vmatrix} / 2 }\right)^{2k}}{(k!)^{2}}.
\end{align*}
Substituting into (\ref{eq:integral_L}), rearranging terms and applying double-angle formulae, the PDF of $\mathbf{L}$ takes the form
\begin{align}
p(\mathbf{L})=\frac{4l_{11}^{3}l_{22}}{\pi} K  \e^{- \alpha}\sum_{k=0}^{\infty} \frac{\left[|l_{21}|l_{22}|\epsilon_{2}|\right]^{2k}}{2^{2k}(k!)^{2}} {\cal J}_k    \label{eq:integral_L2}
\end{align}
where
\begin{align}
{\cal J}_k = \int_{0}^{\pi/2}\e^{- \beta \cos(2\theta)} \left(\sin(2\theta)\right)^{2k+1}d\theta \nonumber
\end{align}
and
\begin{align}
& \alpha = \frac{ l_{11}^{2}s_{1} +(|l_{21}|^{2}+l_{22}^{2})s_{2}}{2}, \; \; \beta = \frac{l_{11}^{2}\epsilon_{1}+(|l_{21}|^{2}-l_{22}^{2})\epsilon_{2}}{2} \nonumber
\end{align}
with $s_i$ and $\epsilon_i$ defined in the theorem statement.
Now, using \cite[Eq.\ 3.915.4]{gradshteyn1965table}, the remaining integral evaluates to:
\begin{align*}
{\cal J}_{k} = \frac{\sqrt{\pi}}{2}k!\sum_{m=0}^{\infty}{\frac{\beta^{2m}}{2^{2m}m!\Gamma\left(k+m+\frac{3}{2}\right)}}.
\end{align*}
Substituting into (\ref{eq:integral_L2}) and rearranging the terms, we get
\begin{align}
p(\mathbf{L}) =\frac{2l_{11}^{3}l_{22}}{\sqrt{\pi}}e^{- \alpha} K \sum_{k=0}^{\infty}\sum_{m=0}^{\infty}\frac{\left( \frac{|l_{21}|^{2}l_{22}^{2}\epsilon_{2}^{2}}{4}\right)^{k}\left(\frac{\beta^{2}}{4}\right)^{m}}{k!m!\Gamma\left(k+m+\frac{3}{2}\right)}  \;  \nonumber
\end{align}
which, upon substituting for $\alpha$ and $\beta$ and noting that $\sum_{k=0}^{\infty}\sum_{m=0}^{\infty} \frac{a^{k}b^{m}}{k!m!\Gamma{\left(k+m+\frac{3}{2}\right)}} = \frac{\sinh{\left(2\sqrt{a+b}\right)}}{\sqrt{\pi}\sqrt{a+b}}$ yields the desired PDF of $\mathbf{L}$,
\begin{align}
p(\mathbf{L})&=\frac{8l_{11}^3 l_{22}}{\pi} K e^{-\frac{1}{2}\left(l_{11}^2 s_{1}+\left(|l_{21}|^2 +l_{22}^2\right)s_2 \right)} \nonumber \\
& \hspace{-10mm} \times \frac{\sinh\left(\frac{1}{2}\sqrt{(l_{11}^2\epsilon_1+(|l_{21}|^2+l_{22}^2)\epsilon_2)^2-4 l_{11}^2 l_{22}^2\epsilon_{1}\epsilon_{2}}\right)}{\sqrt{(l_{11}^2\epsilon_1+(|l_{21}|^2+l_{22}^2)\epsilon_2)^2-4 l_{11}^2 l_{22}^2\epsilon_1\epsilon_2}}. \label{eq:dis_L}
\end{align}
Now consider the Cholesky decomposition $\mathbf{W}=\mathbf{H}\mathbf{H}^{\dagger}= \mathbf{L}\mathbf{L}^{\dagger}$.  
The Jacobian of this transformation is $4l_{11}^{3}l_{22}$. With this, along with (\ref{eq:dis_L}), we perform the change of variables $\mathbf{L} \to \mathbf{W}$ to arrive at the result stated in (\ref{eq:pw1}).

\subsection{Proof of Theorem \ref{th:lambda} } \label{Ap:Th2Proof}

We consider the eigendecomposition $\mathbf{W}=\mathbf{U} \mathbf{\Lambda} \mathbf{U}^{\dagger}$, with $\mathbf{\Lambda}=\left( \begin{smallmatrix} 
  \lambda_{1} & 0\\
  0 & \lambda_{2} 
\end{smallmatrix} \right)$, $\lambda_1 \ge \lambda_2>0$, and $\mathbf{U}$ unitary.  
For $2 \times 2$, $\mathbf{U}$ may be parametrised in terms of 2 free (angular) parameters as follows (see e.g., \cite{fraidenraich2007mimo,fraidenraich2008mimo}): $\mathbf{U}=\left( \begin{smallmatrix} 
  \cos{\kappa} & -e^{j\psi}\sin{\kappa}\\
  e^{-j\psi}\sin{\kappa} & \cos{\kappa} 
\end{smallmatrix} \right)$, $0 \leqslant \kappa \leqslant \pi/2$, $0 \leqslant \psi \leqslant 2\pi$. 
The Jacobian of this transformation is $\frac{1}{2}(\lambda_{1}-\lambda_{2})^{2}\sin(2\kappa)$. Identifying the entries of $\mathbf{W}$ in (\ref{eq:pw1}) in terms of the four real parameters $\kappa, \psi, \lambda_1$ and $\lambda_2$, we obtain the following  for the ordered eigenvalue PDF of $\mathbf{W}$:
\begin{align}
p(\lambda_{1},\lambda_{2})&= \frac{K}{\pi} \, (\lambda_{1}-\lambda_{2})^{2}  \int_0^{2 \pi} \int_{0}^{\frac{\pi}{2}}e^{-\frac12 \nu(\lambda_{1},\lambda_{2},\kappa)} \, \nonumber \\
& \times \frac{\sinh\left(\frac12 \sqrt{\eta(\lambda_{1},\lambda_{2},\kappa)}\right)}{\sqrt{\eta(\lambda_{1},\lambda_{2},\kappa)}}\sin(2\kappa)d\kappa d\psi  \, , \nonumber
\end{align}
where $\nu(\lambda_{1},\lambda_{2},\kappa)$ and $\eta(\lambda_{1},\lambda_{2},\kappa)$ are given in the theorem statement.
After integration over $\psi$, we obtain (\ref{eq:plambda_integral}).

\subsection{Proof of Corollary \ref{cor:epsilon}}
\label{app:sec}
The PDF of $\mathbf{W}$ (\ref{eq:w_part}) follows directly from (\ref{eq:pw1}). The joint eigenvalue PDF (\ref{eq:lambda_part_case}) is obtained from (\ref{eq:w_part}) upon applying the same steps as in Appendix \ref{Ap:Th2Proof}.
The expression (\ref{eq:lambda_part_case}) is particularly convenient, allowing us to use the Cauchy-Binet formula \cite{andreief1883note} to compute the marginal CDFs of the extreme eigenvalues of $\mathbf{W}$. For example, for $\phi_1 \neq \phi_2$, the CDF of the minimum eigenvalue admits
\begin{align}
F_{\lambda_\mathrm{min}}(x) &  = \mathbb{P}(\lambda_\mathrm{min}\le x)
 = 1- \int_x^\infty \int_x^{\lambda_1} p(\lambda_1,\lambda_2) d\lambda_2 d\lambda_1 \nonumber\\
&  \hspace{-5mm}= 1- \frac{ 1}{|\epsilon_1|\phi_3^2} \det \left( \int_x^\infty \lambda^{i-1}g(\lambda)^{j-1} e^{-\lambda/\phi_3} d\lambda \right)_{i,j=1,2}  \nonumber\\
& \hspace{-5mm} =1-\frac{e^{-x/\phi_3}}{\phi_2 - \phi_1}\left( \phi_2e^{-x/\phi_2} - \phi_1e^{-x/\phi_1} \right. \nonumber \\
&  \hspace{-4mm} \Bigl. + x \left( \Ei(-x/\phi_2)- \Ei(-x/\phi_1) \right)\Bigr),  \nonumber 
\end{align}
where the Cauchy-Binet formula was applied to obtain the third equality, while the last equality was obtained using \cite[Eq.\ 5.231.2]{gradshteyn1965table} and integration by parts. For the maximum eigenvalue, $F_{\lambda_\mathrm{max}}(x)  = \int_0^x  \int_0^{\lambda_1} p(\lambda_1,\lambda_2) d\lambda_2 d\lambda_1$, which is computed using analogous steps. 

\bibliographystyle{IEEEbib}
\bibliography{BibTeX1}

\end{document}